\newtheorem{theorem}{Theorem}
\newtheorem{definition}[theorem]{Definition}
\newtheorem{problem}[theorem]{Problem}
\newcommand{\lzend}{\mathit{LZEnd}}
\newcommand{\Ot}{\tilde{O}}
\begin{document}
\title{Optimal LZ-End Parsing is Hard}
\author{
    Hideo~Bannai$^{1}$ \quad 
    Mitsuru~Funakoshi$^{2,3}$ \quad 
    Kazuhiro~Kurita$^{4}$ \quad \\ 
    Yuto~Nakashima$^{2}$ \quad 
    Kazuhisa~Seto$^{5}$ \quad 
    Takeaki~Uno$^{6}$ \quad \\ 
    \\
    {$^1$ M\&D Data Science Center,} \\
    {Tokyo Medical and Dental University, Tokyo, Japan}\\
    {\texttt{hdbn.dsc@tmd.ac.jp}}\\
    {$^2$ Department of Informatics, Kyushu University, Fukuoka, Japan}\\
    {\texttt{mitsuru.funakoshi@inf.kyushu-u.ac.jp}}\\
    {\texttt{nakashima.yuto.003@m.kyushu-u.ac.jp}}\\
    {$^3$ Japan Society for the Promotion of Science, Tokyo, Japan}\\
    {$^4$ Nagoya University, Nagoya, Japan,} \\
    {\texttt{kurita@i.nagoya-u.ac.jp}}\\
    {$^5$ Faculty of Information Science and Technology,}\\
    {Hokkaido University, Sapporo, Japan}\\
    {\texttt{seto@ist.hokudai.ac.jp}}\\
    {$^6$ National Institute of Informatics, Tokyo, Japan}\\
    {\texttt{uno@nii.jp}}\\
}
\maketitle
\begin{abstract}
    LZ-End is a variant of the well-known Lempel-Ziv parsing family
    such that each phrase of the parsing has a previous occurrence,
    with the additional constraint that the previous occurrence 
    must end at the end of a previous phrase.
    LZ-End was initially proposed as a greedy parsing, 
    where each phrase is determined greedily from left to right,
    as the longest factor that satisfies the above constraint~[Kreft \& Navarro, 2010].
    In this work, we consider an optimal LZ-End parsing that has the minimum number of phrases in such parsings.
    We show that a decision version of computing the optimal LZ-End parsing is NP-complete by showing a reduction from the vertex cover problem.
    Moreover, we give a MAX-SAT formulation for the optimal LZ-End parsing
    adapting an approach for computing various NP-hard repetitiveness measures recently presented by [Bannai et al.,~2022].
    We also consider the approximation ratio of the size of greedy LZ-End parsing to the size of the optimal LZ-End parsing, and give a lower bound of the ratio which asymptotically approaches $2$.
\end{abstract}
\section{Introduction}
In the context of lossless data compression,
various repetitiveness measures --
especially those based on dictionary compression algorithms --
and relations between them have recently received much attention (see the excellent survey by Navarro~\cite{DBLP:journals/corr/abs-2004-02781,repetitiveness_Navarro21a}).
One of the most fundamental and well-known measures is the LZ77 parsing~\cite{LZ77}, in which a string is parsed into $z$ phrases such that each phrase is a single symbol, or the longest which has a previous occurrence.
LZ-End~\cite{lzend_conference,lzend_journal} is a variant of LZ77 parsing with the added constraint that a previous occurrence of the phrase must end at the end of a previous phrase. More formally, the LZ-End parsing is a sequence $q_1, \ldots, q_{z_e}$ of substrings (called phrases) of a given string that can be greedily obtained from left to right: each phrase $q_i$ satisfies (1) $q_i$ is a single symbol which is the leftmost occurrence of the symbol or (2) $q_i$ is the longest prefix of the remaining suffix which is a suffix of $q_1\cdots q_j$ for some $j<i$.
It is known that LZ-End parsing can be computed in linear time~\cite{lzend_ESA17}, and there exists a space-efficient algorithm~\cite{lzend_DCC17}.

While there is no known data structure of $O(z)$ size that provides efficient random access to arbitrary positions in the string,
it was recently shown that $\Ot(1)$ time access could be achieved with $O(z_e)$ space~\cite{lzend_SODA22}.
Furthermore, 
concerning the difference between $z$ and $z_e$,
an upper bound of $z_e = O(z\log^2 (n/z))$ was shown~\cite{lzend_SODA22}, where $n$ is the length of the (uncompressed) string.
On the other hand, there is an obvious bound of 
$z_e= \Omega(z\log n)$ for the unary string, since a previous occurrence of an LZ-End phrase cannot be self-referencing, i.e., overlap with itself, while an LZ77 phrase can.
Notice that $z \leq z_{no} \leq z_e$ holds for any string,
where $z_{no}$ is the number of phrases in the LZ77 parsing that does not allow self-referencing.
A family of strings such that the ratio $z_e/z_{no}$ asymptotically approaches 2 (for large alphabet~\cite{lzend_journal}, for binary alphabet~\cite{lzend_SPIRE21}) is known, 
and it is conjectured that $z_e \leq 2z_{no}$ holds for any strings~\cite{lzend_journal}.

While the phrases in the parsings described above are chosen greedily (i.e., longest), we can consider variants which do not impose such constraint,
e.g.,
in an {\em LZ-End-like} parsing, 
each phrase $q_i$ satisfies (1) $q_i$ is a single symbol which is the leftmost occurrence of the symbol or (2) $q_i$ is a (not necessary longest) prefix of the remaining suffix which is a suffix of $q_1\cdots q_j$ for some $j<i$.
We refer to an LZ-End-like parsing with the smallest number $z_{end}$ 
of phrases, an {\em optimal} LZ-End parsing~\cite{DBLP:journals/corr/abs-2004-02781},
and call the original, the {\em greedy} LZ-End parsing.\footnote{Notice that we do not need the distinction for LZ77, since the greedy LZ77 parsing is also an optimal LZ77-like parsing.}
Thus $z \leq z_{no} \leq z_{end} \leq z_e$ holds.
Interestingly, $z_{end} \leq g$ holds, where $g$ is the size of the smallest context free grammar that derives (only) the string, while a similar relation between $z_e$ and $g$ does not seem to be known~\cite{DBLP:journals/corr/abs-2004-02781}.

This brings us to two natural and important questions about the measure $z_{end}$:
\begin{itemize}
    \item How efficiently can we compute $z_{end}$?
    \item How much smaller can $z_{end}$ be compared to $z_e$?
\end{itemize}

In this work, we answer a part of the above questions. Namely:
\begin{enumerate}
    \item We prove the NP-hardness of computing $z_{end}$.
    \item We present an algorithm for exact computation by MAX-SAT.
    \item We give a lower bound of the maximum value of the ratio $z_e/z_{end}$.
\end{enumerate}

In Section~\ref{sec:hardness}, we give the hardness result.
Our reduction is from the vertex cover problem: finding a minimum set $U$ of vertices such that every edge is incident to some vertex in $U$.
In Section~\ref{sec:maxsat}, we show a MAX-SAT formulation for computing the optimal LZ-End parsing that follows an approach by Bannai et al. that allows computing NP-hard repetitiveness measures using MAX-SAT solvers~\cite{DBLP:conf/esa/Bannai0IKKN22}.
In Section~\ref{sec:lowerbound}, we consider the ratio $z_e/z_{end}$.
We give a family of binary strings such that the ratio asymptotically approaches 2.
Note that we can easily modify this result to a larger alphabet.
Since $(z_e/z_{\mathit{end}}) \leq (z_e/z_{\mathit{no}})$,
the bound is tight, assuming that the conjecture by Kreft and Navarro~\cite{lzend_journal} holds,

\subsection*{Related work}
The LZ77 and LZ78 are original members of the LZ family~\cite{LZ77,LZ78}.
It is well-known that the (greedy) LZ77 parsing produces the optimal version of the parsing~\cite{LZ77-76}.
On the other hand, the NP-hardness of computing the optimal version of the LZ78 variant~\cite{optimal-LZ78}.
This hardness result is also given by a reduction from the vertex cover problem.
However, our construction of the reduction for the LZ-End differs from that for the LZ78 since these parsings have very different structures.

\section{Preliminaries} \label{sec:preliminaries}

\subsection{Strings}
Let $\Sigma$ be an {\em alphabet}.
An element of $\Sigma^*$ is called a {\em string}.
The length of a string $w$ is denoted by $|w|$.
The empty string $\varepsilon$ is the string of length 0.
Let $\Sigma^+$ be the set of non-empty strings,
i.e., $\Sigma^+ = \Sigma^* \setminus \{\varepsilon \}$.
For any strings $x$ and $y$, $x \cdot y$ denotes the concatenation of two strings.
We will sometimes abbreviate ``$\cdot$'' (i.e., $x \cdot y = xy$).
For a string $w = xyz$, $x$, $y$ and $z$ are called
a \emph{prefix}, \emph{substring}, and \emph{suffix} of $w$, respectively.
They are called a \emph{proper prefix}, a \emph{proper substring}, and a \emph{proper suffix} of $w$
if $x \neq w$, $y \neq w$, and $z \neq w$, respectively.
The $i$-th symbol of a string $w$ is denoted by $w[i]$, where $1 \leq i \leq |w|$.
For a string $w$ and two integers $1 \leq i \leq j \leq |w|$,
let $w[i..j]$ denote the substring of $w$ that begins at position $i$ and ends at
position $j$. For convenience, let $w[i..j] = \varepsilon$ when $i > j$.
We will sometimes use $w[i..j)$ to denote $w[i..j-1]$.
For any string $w$, let $w^1 = w$ and let $w^k = ww^{k-1}$ for any integer $k \ge 2$,
i.e., $w^k$ is the $k$-times repetition of $w$.

\subsection{LZ-End parsing}
We give a definition of the LZ-End parsing, which is a variant of the Lempel-Ziv family.
\begin{definition} [(Greedy) LZ-End parsing]
    The \emph{LZ-End parsing} of a string $w$ is the parsing $\lzend(w) = q_{1}, \ldots, q_{z_{end}}$ of $w$
    such that $q_i$ is either a singleton which is the leftmost occurrence of the symbol or the longest prefix of $q_i \cdots q_{z_{end}}$ 
    which occurs as a suffix of $q_{1} \cdots q_{j}$ for some $j<i$, which we call a {\em source} of the phrase.
\end{definition}
This definition, used in~\cite{lzend_SODA22}, is slightly different from the original version~\cite{lzend_conference,lzend_journal} 
where a symbol is added to each phrase.
The results in this paper hold for the original version as well (which we will show in the full version of the paper), but here we use this definition for simplicity.
We refer to each $q_i$ as a \emph{phrase}.
In this paper, we consider a more general version of the LZ-End parsing:
a parsing $q_{1}, \ldots, q_{z_e}$ of a string $w$ 
such that $q_i$ is a (not necessary longest) suffix of $q_{1} \cdots q_{j}$ for some $j<i$.
We call such a parsing with a minimum number $z_e$ of phrases an {\em optimal LZ-End parsing} of $w$.
We give an example of the greedy LZ-End parsing and the optimal LZ-End parsing in Figure~\ref{fig:lzend}.
\begin{figure}[t]
    \begin{center}
    \includegraphics[width=0.8\textwidth]{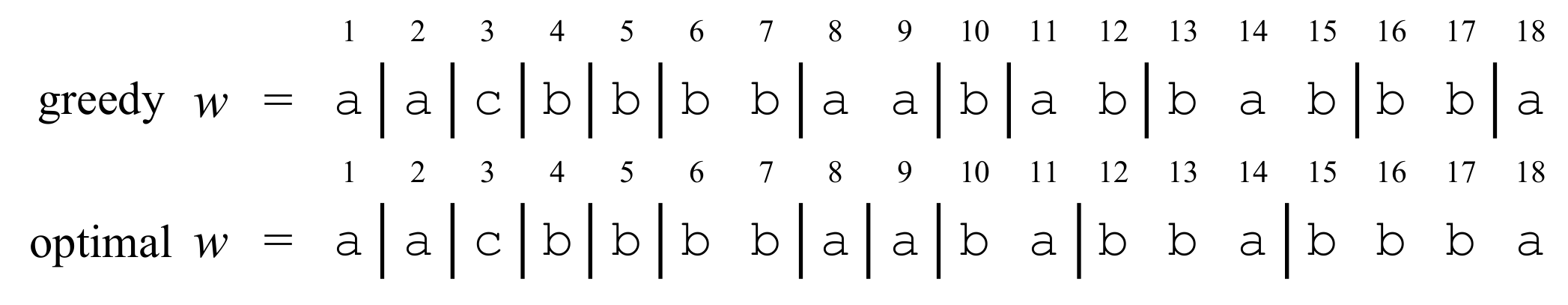}
    \caption{
        Let $w = \mathtt{aacbbbbaababbabbba}$.
        The greedy LZ-End parsing $\lzend(w)$ of $w$ is illustrated in the upper part of the figure.
        For the phrase at position 10, a longer substring $w[10..11] = \mathtt{ba}$ has another previous occurrence at position 7, but there is no phrase that ends at position 8, and any longer substring does not have a previous occurrence
        Therefore, the phrase staring at position 10 is $\mathtt{b}$.
        The lower part of the figure shows an optimal LZ-End parsing (which is smaller than the greedy one) on the same string.
        Each phrase has a previous occurrence that ends at the end of some LZ-End phrase.
        The size of the greedy parsing is 12 and the size of the optimal parsing is 11.
    } 
    \label{fig:lzend}
    \end{center}
\end{figure}

\subsection{Graphs}
Let $G = (V, E)$ be a graph with the set of vertices $V$ and the set of edges $E$.
An edge $e = \{u, v\}$ is called an \emph{incident edge of $u$}.
We denote the set of incident edges of $v$ as $\Gamma_G(v)$.
If there is no fear of confusion, we drop the subscript.
For an edge $e = \{u, v\}$, vertices $u$ and $v$ are the \emph{end points} of $e$.
For a subset of vertices $U \subseteq V$, $U$ is a \emph{vertex cover} if for any $e \in E$, at least one end point of $e$ is contained in $U$.
Let $\tau_G$ be the size of the minimum vertex cover of $G$ (i.e., $\tau_G$ denotes the \emph{vertex cover number} of $G$).
Notice that computing $\tau_G$ is NP-complete~\cite{garey1979computers}.
\section{NP-hardness of computing the optimal LZ-End parsing} \label{sec:hardness}
In this section, we consider the problem of computing the optimal LZ-End parsing of a given string.
A decision version of the problem is given as follows.
\begin{problem}[Decision version of computing the optimal LZ-End parsing~(OptLE)]
    Given a string $w$ and an integer $k$, 
    answer whether there exists an LZ-End parsing of size $k$ or less.
\end{problem}
We show the NP-completeness of $\mathsf{OptLE}$ in the following and present an algorithm for exact computation in the next section.

\begin{theorem}
    $\mathsf{OptLE}$ is NP-complete.
\end{theorem}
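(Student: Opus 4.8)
The plan is to first show membership in NP and then establish NP-hardness by a reduction from the vertex cover problem.

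For membership, a certificate for a ``yes'' instance $(w,k)$ is an LZ-End-like parsing $q_1,\ldots,q_\ell$ of $w$ together with, for each non-singleton phrase $q_i$, an index $j<i$ witnessing that $q_i$ is a suffix of $q_1\cdots q_j$. Given such a certificate, one checks in polynomial time that the phrases concatenate to $w$, that each phrase is either a leftmost occurrence of a single symbol or a suffix ending at the declared earlier phrase boundary, and that $\ell\le k$. Hence $\mathsf{OptLE}\in\mathrm{NP}$.

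For hardness, given a graph $G=(V,E)$ and an integer $t$, I would build in polynomial time a string $w$ and a threshold $k$ so that $G$ has a vertex cover of size at most $t$ iff $w$ admits an LZ-End-like parsing of size at most $k$. The guiding principle is that the only nontrivial resource in an LZ-End-like parsing is the set of \emph{phrase boundaries}: a substring $s$ can be reused as a single phrase only when an occurrence of $s$ ends at an earlier boundary, and (unlike LZ77) self-referencing is forbidden. I would exploit this via three kinds of building blocks, separated by symbols that each occur for the first time and therefore force a boundary and prevent any source from spanning two blocks. A \emph{dictionary} prefix lists, for every vertex $v$, a private string $s_v$, each terminated so that $s_v$ is available as a source ending at a boundary. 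Each \emph{vertex gadget} then offers a binary choice: one way to parse it installs a new reusable source associated with $v$ (``putting $v$ into the cover'') at the cost of one extra phrase, while the other, cheaper, way does not. Finally, for each edge $e=\{u,v\}$ an \emph{edge gadget} is designed so that it can be parsed with one phrase fewer exactly when the source of $u$ or of $v$ has been installed, i.e.\ when $e$ is covered.

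Calibrating the gadgets so that every selected vertex costs exactly one phrase and every covered edge saves exactly one phrase, the size of any parsing equals a fixed constant $C$ plus the number of selected vertices plus a penalty for each uncovered edge; by making the uncovered-edge penalty large enough that an optimal parsing never leaves an edge uncovered, the minimum parsing size becomes $C+\tau_G$. Setting $k=C+t$ then yields the equivalence. For the forward direction, a vertex cover $U$ of size $t$ dictates which vertex gadgets to parse the expensive way, and the induced parsing has size $C+t$. For the converse, from a parsing of size at most $k$ I would read off the set of vertices whose gadgets were parsed the expensive way and argue that it is a vertex cover of size at most $t$, since any uncovered edge would force the penalty and violate the bound.

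The main obstacle is the converse direction: I must rule out ``cheating'' parsings that beat the intended structure by exploiting unintended boundaries or sources. The separator symbols confine every source to a single block, which localizes the analysis, but inside each gadget I still need a careful case analysis over all possible boundary placements to prove that the two intended parses are the only (near-)optimal ones, and that an edge gadget is genuinely cheap only when an incident vertex source ends at a boundary. Verifying that the non-self-referencing restriction of LZ-End does not accidentally grant extra savings, and checking that $|w|$ and $k$ are polynomial in the size of $G$, completes the reduction.
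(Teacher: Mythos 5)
Your high-level architecture coincides with the paper's: a dictionary prefix, a per-vertex gadget offering a binary choice that costs one extra phrase to ``install'' reusable sources, and a per-edge gadget that is one phrase cheaper exactly when an incident vertex was selected, with unique separator symbols localizing the analysis. The NP-membership argument you give is correct (the paper leaves it implicit). However, the proposal stops exactly where the proof actually lives: no concrete strings are exhibited, and you yourself flag that the case analysis ruling out cheating parsings remains to be done. In the paper this is the bulk of the argument: the gadgets are $\mathcal{P}_i = v_i^3 \# v_i^2 \$ \# v_i \$^2 \# \cdot \mathcal{X}_i \cdot \mathcal{Y}_i$, $\mathcal{Q}_j = e_j^3 \#$, $\mathcal{R}_i = v_i^4 \$ \prod_{e_j \in \Gamma(v_i)}(e_j^3 v_i^2 \$) \$ \#$ and $\mathcal{S}_j = \$ e_j^3 \#$, and one must verify, for instance, that $\mathcal{R}_i$ admits essentially only two relevant parsings, of sizes $2|\Gamma(v_i)|+3$ and $2|\Gamma(v_i)|+4$, that only the larger one places phrase boundaries after the occurrences of $\$ e_j^3$, and that every substring of length at least $4$ containing $v_i$ is unique in $\mathcal{W}_G$ so that no unintended reuse is possible. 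Without such a construction and its verification the reduction is a plausible plan, not a proof.

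One concrete point where your plan as stated would run into trouble: you propose to make the penalty for an uncovered edge ``large enough that an optimal parsing never leaves an edge uncovered.'' Amplifying that penalty is not straightforward for LZ-End, because repeating an edge gadget gives every later copy a source in the earlier copies regardless of whether the edge is covered, so each edge naturally contributes only an $O(1)$ (in the paper, exactly $1$) cost difference. The paper instead keeps the penalty at exactly one phrase and closes the converse direction with an exchange argument: if an optimal parsing leaves a set $E'$ of edges uncovered, optimality forces every vertex outside the selected set $V'$ to have at most one incident edge in $E'$ (otherwise one extra phrase in $\mathcal{R}_i$ would save two or more in the $\mathcal{S}$-part), so a vertex cover of size $r+|E'|$ can still be extracted, which is what makes the count $13n+22m+\tau_G$ come out. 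Your converse direction needs either this exchange argument or a genuinely amplified penalty; as written it has a gap.
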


\begin{proof}
    We give a reduction from the ``vertex cover problem'' to OptLE.
    Let $G = (V, E)$ be a graph with a set of vertices $V = \{v_1, \ldots, v_n\}$ 
    and a set of edges $E = \{e_1,\ldots, e_m\}$.
    Suppose that an input graph $G$ of the vertex cover problem is connected and $|\Gamma(v)| \geq 2$ for any $v \in V$.
    We identify each vertex $v_i$ as a symbol $v_i$ and each edge $e_i$ as a symbol $e_i$.
    We also introduce the symbol $\$$,
    and a set of symbols that occur uniquely in the string.
    The latter is represented, for simplicity, by the special symbol $\#$,
    i.e., $\#$ represents a different symbol each time it occurs in our description.
    We consider the string $\mathcal{W}_G$ defined by graph $G$ as follows.
    \begin{itemize}
        \item $\mathcal{W}_G = \prod_{i=1}^{n} \mathcal{P}_i \cdot \prod_{j=1}^{m} \mathcal{Q}_j \cdot \prod_{i=1}^{n} \mathcal{R}_i \cdot \prod_{j=1}^{m} \mathcal{S}_j$
        \item $\mathcal{P}_i = v_i^3 \# v_i^2 \$ \# v_i \$^2 \# \cdot \mathcal{X}_i \cdot \mathcal{Y}_i$
        \item $\mathcal{Q}_j = e_j^3 \#$
        \item $\mathcal{R}_i = v_i^4 \$ \prod_{e_j \in \Gamma(v_i)}(e_j^3 v_i^2 \$) \$ \#$
        \item $\mathcal{S}_j = \$ e_j^3 \#$
        \item $\mathcal{X}_i = \prod_{e_j \in \Gamma(v_i)}(v_i \$ e_j \#)$
        \item $\mathcal{Y}_i = \prod_{e_j \in \Gamma(v_i)}(e_j^2 v_i \#)$
    \end{itemize}
    \begin{figure}[t]
        \begin{center}
        \includegraphics[width=0.8\textwidth]{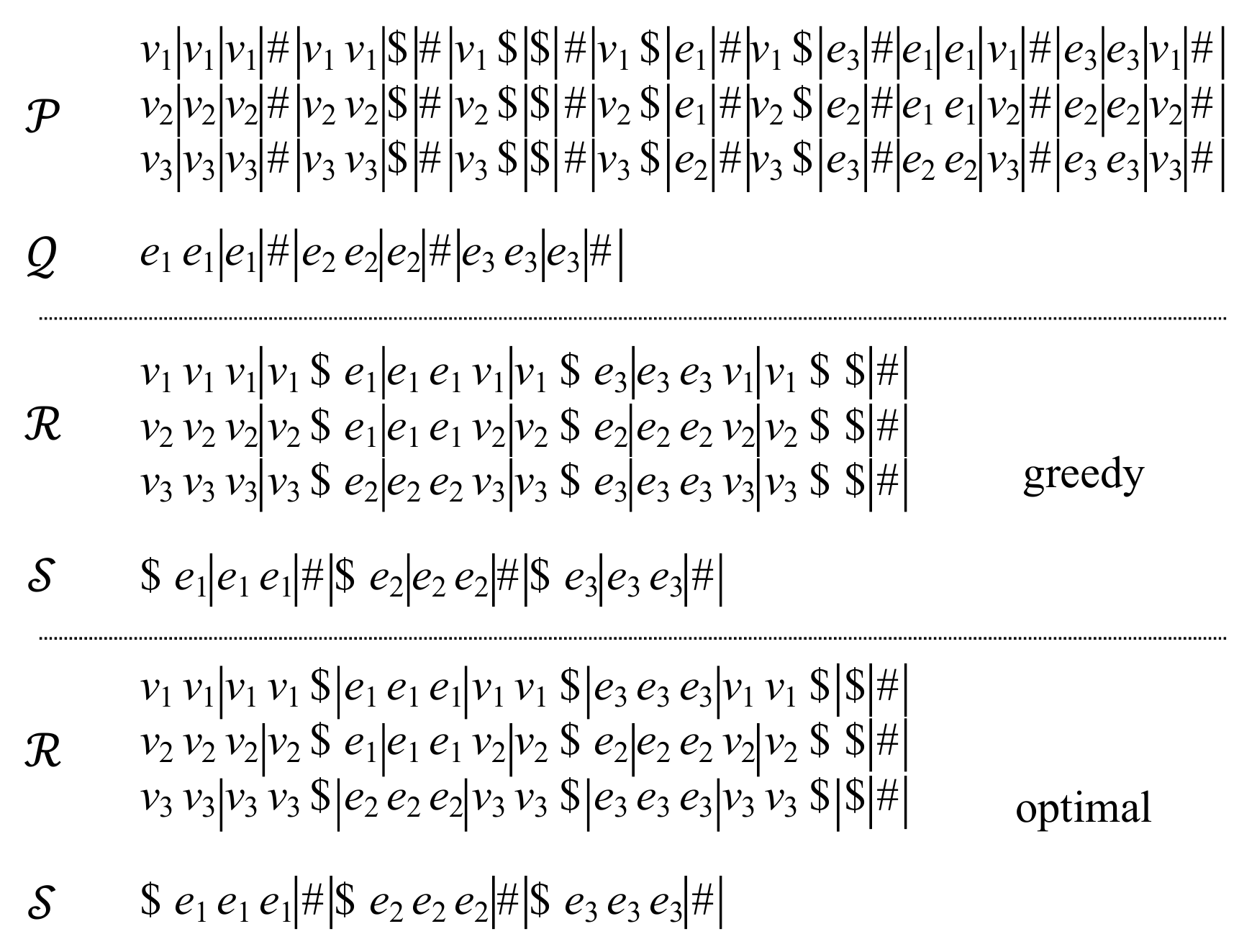}
        \caption{
            Let $G = (V, E)$ be the complete graph of three vertices $v_1, v_2, v_3$ and $e_1 = \{v_1, v_2\}, e_2 = \{v_2, v_3\}, e_3 = \{v_1, v_3\}$.
            $\mathcal{W}_G$ and the greedy parsing and an optimal parsing are illustrated in the figure.
            The first two parts ($\mathcal{P}$ and $\mathcal{Q}$) share the same parsing.
            The last two parts ($\mathcal{R}$ and $\mathcal{S}$) are different.
            The upper part in the figure shows the greedy parsing and the lower part shows an optimal parsing.
            For instance, in the optimal parsing, we can choose $\$e_1^3$ and $\$e_3^3$ as phrases by using non-greedy parsing in $\mathcal{R}_1$.
            In other words, we can reduce two phrases in $\mathcal{S}$-part by adding one phrase in $\mathcal{R}_1$.
            In this example, the optimal parsing represents a vertex cover $\{v_1, v_3\} \subset V$ of $G$ (since $\mathcal{R}_2$ selects the greedy parsing and the others are not).
        }
        \label{fig:gadget}
        \end{center}
    \end{figure}
    An example of this string is illustrated in Figure~\ref{fig:gadget}.
    Note that we use $i$ for representing indices of vertices and $j$ for indices of edges.
    We show that the number of phrases of the optimal parsing of $\mathcal{W}_G$ is less than $13n + 22m + k$
    if and only if the vertex cover number $\tau_G$ is less than $k$.

    First, we observe an optimal LZ-End parsing of $\mathcal{W}_G$.
    Let us consider a parsing of $\prod_{i=1}^{n} \mathcal{P}_i$.
    In this part, the greedy parsing gives $10n + 13m$ phrases.
    In the greedy parsing of $\prod_{i=1}^{n} \mathcal{P}_i$, phrases $v_i^2$ in $v_i^2\$\#$, $v_i\$$ in $v_i\$^2\#, v_i\$e_j\#$, and the second occurrence of $e_j^2$ have length 2, and the other phrases have length 1.
    It is easy to see that this parsing is a smallest possible parsing of $\prod_{i=1}^{n} \mathcal{P}_i$.
    Moreover, other parsings of the same size do not affect the parsing of the rest of the string;
    candidates for a source cannot be increased by selecting any other parsings since the phrases of length 2 are preceded by unique symbols $\#$.
    Hence, we can choose this greedy parsing as a part of an optimal parsing.
    
    In the second part $\prod_{j=1}^{m} \mathcal{Q}_j$, 
    the greedy parsing also gives an optimal parsing which has $3m$ phrases (i.e., each $\mathcal{Q}_j$ is parsed into three phrases since $e_j^2$ occurs in $\mathcal{P}_i$ for some $i$ and $e_j^3$ is unique in $\prod_{i=1}^{n} \mathcal{P}_i \cdot \prod_{j=1}^{m} \mathcal{Q}_j$).
    This parsing is also a smallest possible parsing and does not affect any parsings of the rest of the string.

    The remaining suffix $\prod_{i=1}^{n} \mathcal{R}_i \cdot \prod_{j=1}^{m} \mathcal{S}_j$ is a key of the reduction.
    The key idea is that $\mathcal{S}_j$ represents whether the edge $e_j$ is an incident edge of some vertex in a subset of vertices or not.
    $\$e_j^3$ in $\mathcal{S}_j$ has exactly two previous occurrences in the $\mathcal{R}$-part (since each edge is incident to exactly two vertices).
    Hence $\mathcal{S}_j$ can be parsed into two phrases (i.e., $\$e_j^3, \#$) if and only if $\$e_j^3$ has an occurrence which ends with an LZ-End phrase in the $\mathcal{R}$-part.
    Now we consider the greedy parsing of the $\mathcal{R}_i$-part (let $\Gamma(v_i) = \{ e_{(i, 1)}, \ldots, e_{(i, |\Gamma(v_i)|)}\}$), which is as follows:
    \[
        v_i^3, v_i\$e_{(i, 1)}, e_{(i, 1)}^2v_i, \ldots, v_i\$e_{(i, |\Gamma(v_i)|)}, e_{(i, |\Gamma(v_i)|)}^2v_i, v_i\$^2, \#.
    \]
    The parsing has $2|\Gamma(v_i)|+3$ phrases.
    We claim that this parsing is the smallest possible parsing:
    If the length of every phrases is at most 3, then $2|\Gamma(v_i)|+3$ is the minimum size since the length of $\mathcal{R}_i$ is $6(|\Gamma(v_i)|+1)+1$.
    On the other hand, we can see that substrings of length at least 4 which contain a symbol $v_i$ are unique in the whole string $\mathcal{W}_G$ by the definition.
    Namely, $\$e_j^3$ is the only substring of length at least 4 which is not unique.
    However, any parsing which chooses the length-4 substrings (at the second occurrences) has at least $2|\Gamma(v_i)|+4$ phrases
    because the length-4 substrings do not strictly contain any phrases of the greedy parsing.
    Thus the minimum number of phrases of $\mathcal{R}_i$ is $2|\Gamma(v_i)|+3$ and the above greedy parsing is the only candidate which is the minimum size.
    Notice that phrases of this parsing do not end with $\$e_j^3$.
    Let us consider the other possible parsing of $\mathcal{R}_i$-part as follows:
    \[
        v_i^2, v_i^2\$, e_{(i, 1)}^3, \ldots, v_i^2\$, e_{(i, |\Gamma(v_i)|)}^3, v_i^2\$, \$, \#.
    \]
    This parsing has $2|\Gamma(v_i)|+4$ phrases.
    Notice that this parsing has phrases which end with $\$e_j^3$.
    Thus $\mathcal{S}_j$ can be parsed into two phrases
    if we choose a non-greedy parsing such that there exists a phrase that ends at one of these positions.
    In other words, if we choose such a parsing in the $\mathcal{R}_i$-part, we can reduce at most $|\Gamma(v_i)|$ phrases in the $\mathcal{S}$-part.
    These observations implies that $\mathcal{R}_i$ is parsed into $2|\Gamma(v_i)|+3$ or $2|\Gamma(v_i)|+4$ phrases in any optimal parsing of $\mathcal{W}_G$.

    Let us consider an optimal LZ-End parsing.
    Let $r$ be the number of substrings $\mathcal{R}_i$ which contain $2|\Gamma(v_i)|+4$ phrases,
    and $s$ be the number of substrings $\mathcal{S}_j$ which contain exactly two phrases.
    Then the size of the parsing is 
    \[
        (10n+13m)+(3m)+(2\sum_{i=1}^n |\Gamma (v_i)| +3n+r)+(3m-s) = 13n+23m+r-s.
    \]
    We consider a subset $V'$ of vertices such that $v_i \in V'$ if and only if $\mathcal{R}_i$ is parsed into $2|\Gamma(v_i)|+4$ phrases (i.e., $|V'| = r$), and a subset $E'$ of edges such that $e_j \in E'$ if and only if $\mathcal{S}_i$ is parsed into three phrases (i.e., $|E'| = m-s$).
    If $E' = \emptyset$ (i.e., $s = m$), $V'$ is a vertex cover of $G$.
    Otherwise, $V'$ is not a vertex cover of $G$.
    However we can obtain the vertex cover number by using the parsing.
    Since the parsing is an optimal parsing, we can observe that there is no vertex $v_i$ in $V \setminus V'$ which has two or more incident edges in $E'$ (we can reduce two or more phrases in $\mathcal{S}$-part by adding one phrase in $\mathcal{R}_i$, a contradiction).
    This implies that we can obtain a vertex cover by choosing one vertex in $V \setminus V'$ for each edge in $E \setminus E'$.
    Then there exists an optimal LZ-End parsing of the same size which can directly represent a vertex cover.
    In other words, the vertex cover number is $r+m-s$ if there exists an optimal LZ-End parsing of $13n + 22m + (r+m-s)$ phrases.
    It is clear from the above constructions that there exists an optimal LZ-End parsing of $13n + 22m + k$ phrases if the vertex cover number is $k$.
\end{proof}

\section{MAX-SAT Formulation} \label{sec:maxsat}

An approach for exact computation of various NP-hard repetitiveness measures by formulating them as MAX-SAT instances so that very efficient solvers can be taken advantage of, was shown in~\cite{DBLP:conf/esa/Bannai0IKKN22}.
Here, we show that this approach can be adapted to computing the optimal LZ-End parsing as well.


Let the input string be $T[1..n]$, and for any $i\in[2,n]$, let $M_i = \{ j \mid 1\leq j < i, T[j] = T[i]\}$.
Below, we use $1$ to denote true, and $0$ to denote false.
We introduce the following Boolean variables:
\begin{itemize}
  \item $p_i$ for all $i\in [1,n]$: $p_i = 1$ if and only if position $i$ is a starting position of an LZ-End phrase. Note that $p_1 = 1$.
  \item $c_{i}$ for all $i\in [1,n]$: $c_i = 1$ if and only if position $i$ is the left-most occurrence of symbol $T[i]$.
  \item $r_{i\rightarrow j}$ for all $i\in [2,n]$ and $j\in M_i$: $r_{i\rightarrow j} = 1$ if and only if position $i$ references position $j$ via an LZ-End factor.
\end{itemize}

Notice that the truth values of $c_i$ are all fixed for a given string and is easy to determine.
Furthermore, the left-most occurrence must be beginning of a phrase, so, some values of $p_i$ can also be fixed. For all $i\in [1,n]$:
\begin{align}
  c_i = p_i = 1 & \mbox{ if $i$ is left-most occurrence of $T[i]$}, \\
  c_i = 0       & \mbox{ otherwise. }
\end{align}

MAX-SAT is a variant of SAT, in which there are two types of clauses: hard clauses and soft clauses.
A solution for MAX-SAT is a truth assignment of the variables such that all hard clauses are satisfied (evaluate to $1$), and the number of soft clauses that are satisfied is maximized.

The truth values of $p_i$ define the factors, so in order to minimize the number of factors, 
we define the soft clauses as $\lnot p_i$ for all $i\in[1,n]$.
Below, we give other constraints between the variables that must be satisfied, i.e., hard clauses.

The symbol at any position must either be a left-most occurrence, or it must reference some position to its left.
That is, for any $i\in[1,n]$:
\begin{align}
  c_i + \sum_{j\in M_i} r_{i\rightarrow j} = 1\label{maxsateq:singletonorref}.
\end{align}

In order to ensure that references in the same LZ-End phrase are consistent, we have the following two constraints.
The first ensures that if $i$ references $j$ and the symbols at positions $i-1$ and $j-1$ are different
or does not exist (i.e., $j=1$),
position $i$ and $i-1$ cannot be in the same LZ-End phrase.
For all $i \in[2,n]$ and $j \in M_i$ s.t. $j = 1$ or $T[j-1]\neq T[i-1]$:
\begin{align}
  r_{i\rightarrow j} \implies p_i,
  \label{maxsateq:refadjacent1}
\end{align}
The second ensures that if position $i$ references
position $j$ and $i$ is not a start of an LZ-End phrase, then, position $i-1$ must reference position $j-1$.
For all $i\in [2,n]$ and $j\in M_i\setminus\{1\}$ s.t. $T[j-1]=T[i-1]$:
\begin{align}
  r_{i\rightarrow j} \land \lnot p_i \implies r_{i-1\rightarrow j-1}.
  \label{maxsateq:refadjacent2}
\end{align}

Finally, the following constraints ensure that the reference of each LZ-End phrase must end at an end of a previous LZ-End phrase.
For all $i \in [1,n]$ and $j\in M_i$:
\begin{align}
  \begin{cases}
    r_{i\rightarrow j} \land p_{i+1} \implies p_{j+1} & \mbox{if $i \in [1,n)$} \\
    r_{i\rightarrow j} \implies p_{j+1}               & \mbox{if $i = n$.}
  \end{cases}\label{maxsateq:copyfrom_phraseend}
\end{align}

It is easy to see that the truth assignments that are derived from any LZ-End parsing will satisfy the above constraints.

We now show that any truth assignment that satisfies the above constraints will represent a valid LZ-End parsing.
The truth values for $p_i$ implies a parsing where each phrase starts at a position $i$ if and only if $p_i = 1$.
Constraint~(\ref{maxsateq:singletonorref}) ensures that each position is either a singleton, or references a unique previous position.
Thus, it remains to show that the referencing of each position of a given factor is consistent (adjacent positions reference adjacent positions) and ends at a previous phrase end.

For any position $i$ such that $c_i = 0$,
let $j\in M_i$ be the unique value such that $r_{i\rightarrow j}=1$.
We can see that any such position $i$ that is not at the beginning of a phrase (i.e., $p_i = 0$) will reference a position consistent with the reference of position $i-1$:
If $j = 1$ or $T[j-1]\neq T[i-1]$, then Constraint~(\ref{maxsateq:refadjacent1}) would imply $r_{i\rightarrow j} = 0$. Thus, we have $j > 1$ and $T[j-1]= T[i-1]$, and from Constraint~(\ref{maxsateq:refadjacent2}),
we have that $r_{i-1\rightarrow j-1}$, and the referencing inside a factor is consistent.
Finally, from Constraint~(\ref{maxsateq:copyfrom_phraseend}),
the last reference in a phrase always points to an end of a previous LZ-End phrase.

The MAX-SAT instance contains $O(n^2)$ variables,
and the total size of the CNF is $O(n^2)$:
$O(n)$ clauses of $O(n)$ size (Constraint~(\ref{maxsateq:singletonorref}) using linear size encodings of cardinality constraints, e.g.~\cite{DBLP:conf/cp/Sinz05}),
and $O(n^2)$ clauses of size $O(1)$
(the soft clauses, and Constraints~(\ref{maxsateq:refadjacent1}),~(\ref{maxsateq:refadjacent2}),~(\ref{maxsateq:copyfrom_phraseend})).

We note that it is not difficult to obtain a MAX-SAT formulation for the original definition of LZ-End by minor modifications.

\section{Approximation ratio of greedy parsing to optimal parsing} \label{sec:lowerbound}

In this section, we consider an approximation ratio of the size $z_e$ of the greedy LZ-End parsing to the size $z_{end}$ of the optimal LZ-End parsing.
Here, we give a lower bound of the ratio.

\begin{theorem} \label{thm:lowerbound}
    There exists a family of binary strings 
    such that the ratio $z_e/z_{end}$ asymptotically approaches $2$.
\end{theorem}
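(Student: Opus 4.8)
The plan is to construct an explicit family of binary strings $w_t$ (parameterized by some growing integer $t$) for which the greedy LZ-End parsing is forced to be roughly twice as long as an optimal LZ-End-like parsing, and then analyze both parsings carefully to show $z_e/z_{end} \to 2$. The natural starting point is the known families witnessing $z_e/z_{no} \to 2$ for binary alphabets (cited as~\cite{lzend_SPIRE21}); since $z_{no} \leq z_{end} \leq z_e$, merely reusing such a family is not automatically enough — I would need either that family already has $z_{end}$ close to $z_{no}$, or a fresh construction tailored so that the \emph{optimal} LZ-End-like parsing (not just LZ77-without-self-reference) is about half of the greedy one. The driving phenomenon, already visible in Figure~\ref{fig:lzend} and in the hardness gadget, is that greedy parsing is forced to break a long repeated factor into a short phrase because no prior phrase happens to end at the position where its source would end; a non-greedy parsing can \emph{realign} phrase boundaries so that sources land exactly on phrase ends, saving roughly a factor of two.

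\textbf{Concretely, first I would} design a building block of the form $w = u_1 u_2 \cdots u_t$ where each block contains a long repeated substring whose only previous occurrence ends one position \emph{before} any greedy phrase boundary, so that the greedy rule is repeatedly penalized: each block costs about $2$ phrases greedily but only about $1$ phrase optimally. A clean way to engineer this is to use a prefix that establishes short phrase boundaries on a ``shifted'' grid, then a long tail built from a de Bruijn-like or Fibonacci-like binary string whose repetitions have previous occurrences that are off-by-one with respect to the greedy grid. I would then (i) compute the greedy LZ-End parsing $\lzend(w_t)$ exactly, obtaining $z_e = 2t + O(1)$, by tracking, phrase by phrase, that the longest suffix-of-previous-phrase-prefixes condition yields short phrases; and (ii) exhibit an explicit LZ-End-like parsing of size $z_{end} \leq t + O(1)$ by choosing non-greedy boundaries so every long phrase's source ends exactly at a chosen phrase end. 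Dividing gives $z_e/z_{end} \geq (2t+O(1))/(t+O(1)) \to 2$.

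\textbf{The two lemmas I would isolate are:} an \emph{upper bound} $z_{end} \leq t + O(1)$, proved by simply displaying a valid parsing and checking each phrase is a suffix of the concatenation of earlier phrases (this direction is constructive and should be routine once the construction is right), and a \emph{lower bound} $z_e \geq 2t - O(1)$ on the \emph{greedy} size, which requires arguing that the greedy algorithm genuinely cannot do better on $w_t$. For the greedy lower bound I would proceed inductively on blocks: assuming the greedy parse reaches a block boundary in a known state (which earlier phrase ends are available as candidate source-ends), I show the greedy longest-match rule must split the block's long factor because the unique long source ends at a non-phrase-boundary, forcing two phrases. Maintaining this invariant across all $t$ blocks is the crux.

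\textbf{The hard part will be} the greedy lower bound, specifically proving that the off-by-one misalignment is \emph{self-sustaining}: I must ensure that the short phrases produced by the greedy parser in one block do not accidentally create a new phrase-end that would let a later block be parsed in a single long greedy phrase, which would collapse the ratio back toward $1$. Controlling exactly the set of positions that are phrase-ends after greedy parsing — and verifying the LZ-End source constraint (a source must \emph{end} at a prior phrase end) blocks the greedy long match at every block — is the delicate bookkeeping. I would manage this by making the alphabet usage and the periodic structure rigid enough (e.g., inserting unique-context separators analogous to the $\#$ and $\$$ devices in Section~\ref{sec:hardness}, then collapsing to binary by encoding each separator as a fixed binary marker) so that the candidate source-ends are fully determined and the induction invariant is easy to state and verify. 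Finally I would note, as the paper already remarks, that since $z_e/z_{end} \leq z_e/z_{no}$ and the Kreft--Navarro conjecture $z_e \leq 2z_{no}$ would give a matching upper bound of $2$, this lower bound is conjecturally tight, and that the construction extends verbatim to larger alphabets.
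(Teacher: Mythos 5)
Your plan identifies exactly the right mechanism --- an off-by-one misalignment between where a long factor's source ends and where the greedy parser happens to place phrase ends, so that each block costs two greedy phrases but one optimal phrase after paying an $O(1)$ realignment fee --- and you correctly locate the crux in the greedy lower bound (showing the misalignment is self-sustaining and that greedy short phrases do not create usable phrase ends for later blocks). This is the same approach as the paper's. However, as written the proposal has a genuine gap: no concrete family is ever exhibited, and every step you flag as ``routine once the construction is right'' (the exact greedy parse, the explicit optimal parse, the block-by-block induction) depends entirely on having that construction in hand. The candidates you gesture at (de Bruijn/Fibonacci tails, binary-encoded separators) are not obviously workable and are not analyzed, so neither the upper bound $z_{end}\leq t+O(1)$ nor the lower bound $z_e\geq 2t-O(1)$ is actually established.

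For comparison, the paper's construction is $w_k = aa\cdot\prod_{i=1}^{k}(a^{2^i})\cdot b^4\cdot\prod_{j=1}^{K}(a^jb^3)$ with $K=\sum_{i=1}^k 2^i$. The initial $a$-run of length $2^{k+1}$ greedily parses into $k+2$ doubling phrases (so every $a^j$ with $j\leq K$ has a source in that run), and the whole trick lives in $b^4$: greedy parses it as $b,b,b^2$, leaving phrase ends after the first, second, and fourth $b$, so each block $a^jb^3$ must split as $a^jb^2,b$ (two phrases, $2K$ total); the optimal parsing instead spends one extra phrase to parse $b^4$ as $b,b,b,b$, creating a phrase end after the third $b$, whereupon every $a^jb^3$ is a single phrase ($K$ total). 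The greedy lower bound is an induction on blocks whose invariant is exactly the one you anticipated: no greedy phrase can straddle a block boundary because a straddling phrase would have to contain $ba^{j+1}b^{\ell}$, and $a^{j+1}$ occurs only in the initial run. This gives $z_e = 2K+k+5$ versus $z_{end}\leq K+k+6$, and the ratio tends to $2$. So your plan is sound and matches the paper, but it is a plan; the theorem is only proved once a string like $w_k$ is written down and these two counts are verified.
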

\begin{proof}
    Let $K = \sum_{i=1}^{k} 2^i$ for any positive integer $k \geq 1$.
    The following binary string $w_k$ over an alphabet $\{a, b\}$ gives the lower bound:
    \[
        w_k = aa \cdot \prod_{i=1}^{k}(a^{2^i}) \cdot b^4 \cdot \prod_{i=1}^{K}(a^i b^3).
    \]
    It is easy to see that $K$ is the length of the substring $\prod_{i=1}^{k}(a^{2^i})$.
    First, we show the greedy parsing of $w_k$.
    Let $W_0 = aa \cdot \prod_{i=1}^{k}(a^{2^i}) \cdot b^4$ (i.e., a prefix of $w_k$) and
    $W_j = W_{j-1} \cdot a^j b^3$ for any $1 \leq j \leq K$.
    Notice that $W_K = w_k$.
    We show that 
    \begin{equation} \label{eq:lower-greedy}
        \lzend(W_j) = \lzend(W_{j-1}), a^jb^2, b
    \end{equation}
    by induction on $j$.
    Initially, we consider the greedy parsing of $W_0$.
    The greedy parsing of the first run (i.e., maximal substring with a unique symbol) is $a, a, a^2, \ldots, a^{2^k}$ of size $k+2$.
    The second run is parsed into three phrases $b, b, b^2$.
    Thus 
    \[
        \lzend(W_0) = a, a, a^2, \ldots, a^{2^k}, b, b, b^2.
    \]
    Moreover, $\lzend(W_1) = \lzend(W_0), ab^2, b$ since $W_1 = W_0 \cdot ab^3$ holds.
    Hence Equation~\ref{eq:lower-greedy} holds for $j=1$.
    Suppose that Equation~\ref{eq:lower-greedy} holds for any $j \leq p$ for some integer $p \geq 1$.
    We show that Equation~\ref{eq:lower-greedy} holds for $j = p+1$.
    Assume that there exists a phrase $x$ of $\lzend(W_{p+1})$ which begins in $W_p$ and ends in a new suffix $a^{p+1}b^3$ of $W_{p+1}$.
    By the induction hypothesis, phrases of $\lzend(W_p)$ which end with $a$ are only in the first $a$'s run.
    This implies that $x$ cannot end with $a$ and $x$ can be written as $x = x'ba^{p+1}b^{\ell}$ for some prefix $x'$ of $x$ and some positive integer $\ell$.
    However, $a^{p+1}$ only occurs in the first $a$'s run.
    Thus $\lzend(W_{p+1})$ cannot have such a phrase $x$, namely $\lzend(W_{p+1}) = \lzend(W_{p}), S$ for some factorization of $S$.
    It is easy to see that the remaining suffix $a^{p+1}b^3$ of $W_{p+1}$ is parsed into $a^{p+1}b^2, b$.
    Hence Equation~\ref{eq:lower-greedy} holds for $j=p+1$, and it also holds for any $j$.
    Notice that $|\lzend(w_k)| = 2K+k+5$ holds.

    Finally, we give a smaller parsing of $w_k$.
    We consider the same parsing for the first run and a different parsing for the second run as $b, b, b, b$.
    In the greedy parsing, $a^jb^3$ cannot be a phrase since the only previous occurrence does not have an LZ-End phrase.
    We can use a substring $a^jb^3$ as a new phrase of $W_j$ (see also Figure~\ref{fig:lowerbound}).
    Thus there exists an LZ-End parsing 
    \[
        a, a, a^2, \ldots, a^{2^k}, b, b, b, b, a^1b^3, \ldots, a^Kb^3.
    \]
    The size of the parsing is $K+k+6$.

    Therefore the ratio $z_e/z_{end}$ asymptotically approaches $2$ for this family of strings.
\end{proof}

\begin{figure}[t]
    \begin{center}
    \includegraphics[width=\textwidth]{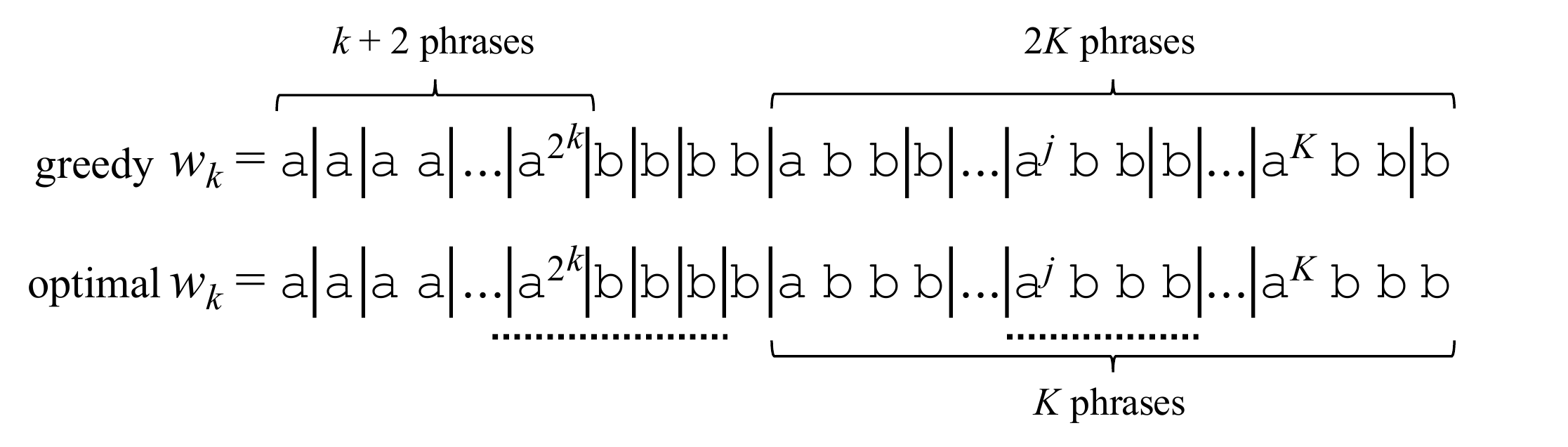}
    \caption{
        Illustration for two variants of LZ-End parsings of a string $w_{k}$ (Theorem~\ref{thm:lowerbound}).
        In the optimal parsing, we can choose $a^jb^3$ (dotted lines) as a phrase for each $j~(1 \leq j \leq K)$ by adding a single letter phrase $b$.
    }
    \label{fig:lowerbound}
    \end{center}
\end{figure}

Note that this family of strings also gives a lower bound of the ratio $z_{e}/z_{no}$ since $(z_e/z_{\mathit{end}}) \leq (z_e/z_{\mathit{no}})$ holds.

%
    
\section{Conclusions}
In this paper, we first studied the optimal version of the LZ-End variant.
We showed the NP-completeness of the decision version of computing the optimal LZ-End parsing and presented an approach for exact computation of the optimal LZ-End by formulating as MAX-SAT instances.
We also gave a lower bound of the possible gap (as the ratio) between the greedy LZ-End and the optimal LZ-End.
Finally, we note possible future work in the following.
\begin{itemize}
    \item Our reduction from the vertex cover problem uses a polynomially large alphabet.
    How can we construct a reduction with a small alphabet?
    \item The most interesting remaining problem is an upper bound of the ratio discussed in Section~\ref{sec:lowerbound}.
    We conjecture that there exists a constant upper bound (i.e., $z_e/z_{end} \leq c$ for any strings where $c$ is a constant).
    This implies that the greedy parsing gives a constant-approximation of the optimal parsing.
    On the other hand, if there exists a family of strings which gives $c>2$ or non-constant ratio, then the conjecture $z_e \leq 2z_{no}$ does not stand.
\end{itemize}
\section*{Acknowledgments}
We would like to thank Dominik K\"oppl for discussion.
\clearpage
\bibliographystyle{abbrv}
\bibliography{ref}
\end{document}